\documentclass[letterpaper,10pt,twocolumn]{IEEEtran} 

\IEEEoverridecommandlockouts                              

\usepackage{amsfonts}
\usepackage{float}
\usepackage[colorlinks,linkcolor=blue]{hyperref}
\usepackage[latin1]{inputenc}
\usepackage{amsmath}
\usepackage{graphicx}
\usepackage{amssymb}
\usepackage{amsthm}
\usepackage{verbatim}
\usepackage{epsfig}
\usepackage[labelfont=bf]{caption}
\usepackage{enumerate}
\usepackage{subfig}
\usepackage{epstopdf}

\newtheorem{definition}{Definition}

\newtheorem{remark}{Remark}

\newtheorem{proposition}{Proposition}

\def\begequarr{\begin{eqnarray}}
\def\endequarr{\end{eqnarray}}
\def\begequarrs{\begin{eqnarray*}}
\def\endequarrs{\end{eqnarray*}}
\def\begarr{\begin{array}}
\def\endarr{\end{array}}
\def\begequ{\begin{equation}}
\def\endequ{\end{equation}}
\def\lab{\label}
\def\begdes{\begin{description}}
\def\enddes{\end{description}}
\def\begenu{\begin{enumerate}}
\def\begite{\begin{itemize}}
\def\endite{\end{itemize}}
\def\endenu{\end{enumerate}}

\def\lef[{\left[\begin{array}}
\def\rig]{\end{array}\right]}
\def\qed{\hfill$\Box \Box \Box$}
\def\begcen{\begin{center}}
\def\endcen{\end{center}}
\def\begrem{\begin{remark}\rm}
\def\endrem{\end{remark}}


\def\begmat#1{\begin{bmatrix}#1\end{bmatrix}}
\def\begali#1{\begin{align}{#1}\end{align}}
\def\begalis#1{\begin{align*}{#1}\end{align*}}

\def\calk{{\cal K}}
\def\caly{{\cal Y}}

\def\calh{{\cal H}}

\def\call{{\cal L}}

\def\cald{{\cal D}}



\def\hatthe{\hat{\theta}}
\def\tilthe{\tilde{\theta}}
\def\dothatthe{\dot{\hat{\theta}}}

\def\liminf{\lim_{t \to \infty}}
\def\callinf{{\cal L}_\infty}
\def\litcallinf{\ell_\infty}
\def\L2e{{\cal L}_{2e}}
\def\bul{\noindent $\bullet\;$}
\def\rea{\mathbb{R}}
\def\intnum{\mathbb{Z}}

\def\adj{\mbox{adj}}

\def\hal{{1 \over 2}}
\def\et{\varepsilon_t}


\def\IJACSP{{\it Int. J. on Adaptive Control and Signal Processing}}

\def\TAC{{\it IEEE Trans. Automatic Control}}

\def\AUT{{\it Automatica}}


\usepackage{color}




\graphicspath{{figs/}}

\title{\LARGE \bf
New Results on Parameter Estimation via Dynamic Regressor Extension and Mixing: Continuous and Discrete-time Cases
}

\author{
Romeo Ortega, {\it Fellow, IEEE},
Stanislav Aranovskiy, {\it Senior member, IEEE},
Anton A. Pyrkin, {\it Member, IEEE}, 
Alessandro Astolfi, {\it Fellow, IEEE},
Alexey A. Bobtsov, {\it Senior member, IEEE} 
\thanks{R. Ortega is with Laboratoire des Signaux et Syst\`emes, CNRS-SUPELEC, Plateau du Moulon, 91192, Gif-sur-Yvette, France and ITMO University, Kronverkskiy av. 49, St. Petersburg, 197101, Russia.}
\thanks{S. Aranovskiy is with CentraleSup\'elec -- IETR, Avenue de la Boulaie, 35576 Cesson-S\'evign\'e, France}
\thanks{ A. Pyrkin is with Hangzhou Dianzi University, Hangzhou, 310018, China.}
\thanks{S. Aranovskiy, A. Pyrkin and A. Bobtsov are with the Faculty of Control Systems and Robotics, ITMO University, Kronverkskiy av. 49, St. Petersburg, 197101, Russia.} 
\thanks{A. Astolfi is with the Department of Electrical and Electronic Engineering, Imperial College London, London SW7 2AZ, UK and the DICII,
Universita di Roma ``Tor Vergata'', Via del Politecnico 1, 00133 Roma, Italy}
\thanks{A. Pyrkin is a corresponding author. E-mail:  {\tt\small a.pyrkin@gmail.com}}}

\begin{document}

\maketitle
\thispagestyle{plain}
\pagestyle{plain}

\begin{abstract}
We present some new results on the dynamic regressor extension and mixing parameter estimators for linear regression models recently proposed in the literature. This technique has proven instrumental in the solution of several open problems in system identification and adaptive control. The new results include: (i) a unified treatment of the continuous and the discrete-time cases; (ii) the proposal of two new extended regressor matrices, one which guarantees a quantifiable {\em transient performance improvement}, and the other exponential convergence under conditions that are {\em strictly weaker} than regressor persistence of excitation; and (iii) an alternative estimator ensuring parameter estimation in {\em finite-time} that retains its alertness to track time-varying parameters. Simulations that illustrate our results are also presented.
\end{abstract}

\section{Introduction}
\lab{sec1}
%
Estimation of the parameters that describe an underlying physical setting is one of the central problems in control and systems theory that has attracted the attention of many researchers for several  years. A typical scenario, which appears in system identification and adaptive control \cite{GOOSIN,IOASUN,LJU,NARANN,SASBOD}, is when the unknown parameters and the measured data are linearly related in a so-called {\em linear regression equation} (LRE). Classical solutions for this problem are gradient and least-squares (LS) estimators. The main drawback of these schemes is that convergence of the parameter estimates relies on the availability of signal excitation, a feature that is codified in the restrictive assumption of persistency of excitation (PE) of the regressor vector. Moreover, their transient performance is highly unpredictable and only a weak monotonicity property of the estimation errors can be guaranteed.  

To overcome these two problems a new parameter estimation procedure, called dynamic regressor extension and mixing (DREM), has recently been proposed in \cite{ARAetaltac} for continuous-time (CT) and in \cite{BELetalsysid} for discrete-time (DT) systems. The construction of DREM estimators proceeds in two steps, first, the inclusion of a {\em free linear operator} that creates an extended, matrix LRE. Second, a {\em nonlinear} manipulation of the data that allow generating, out of an $m$-dimensional LRE,  $m$ {\em scalar}, and independent, LRE. DREM estimators have been successfully applied in a variety of identification and adaptive control problems. Interestingly, it has been shown in \cite{ORTetalaut} that DREM can be reformulated as a functional Luenberger observer.

DREM estimators outperform classical gradient or LS estimators in the following precise aspects: independently of the excitation conditions, DREM guarantees monotonicity of {\em each} element of the parameter error vector that is much stronger than monotonicity of the vector {\em norm}, which is ensured with classical estimators. Moreover, parameter convergence in DREM is established {\em without} the PE condition. Instead of PE a non-square integrability condition on the determinant of a designer-dependent extended regressor matrix is imposed. A final interesting property of DREM that has been established in \cite{GERetalsysid} is that it can be used to generate estimates with {\em finite-time convergence} (FTC), under interval excitation assumption.

The following new results on DREM are presented here:

\noindent {\bf (i)} The unified treatment of the CT and the DT cases.

\noindent {\bf (ii)} The definition of new linear operators that:

\bul ensure parameter error  convergence under excitation conditions that are {strictly weaker} than regressor PE;

\bul guarantee a {transient performance improvement};

\bul show that DREM contains, as a {particular case}, the extended LRE proposed in \cite{KRE}, which is used also in the adaptive controllers recently proposed in \cite{CHOetal,CHOetalijacsp,ROYetal}.

\noindent {\bf (iii)} An alternative estimator, {ensuring FTC}, that retains its alertness to track time-varying parameters.

The remainder of the paper is organized as follows. To set up the notation a brief  description of gradient and DREM estimators is given in Section \ref{sec2}.  In Section \ref{sec3} we present the new version of DREM that ensures convergence under excitation conditions that are strictly weaker than regressor PE. In Section \ref{sec4} a general form of the free operator used in DREM is proposed to, on one hand, re-derive the extended regressor of \cite{KRE} and, on the other hand, prove that transient performance is---quantifiably---improved.  Section \ref{sec5} presents a new DREM-based estimator with FTC. Simulation results are presented in Section \ref{sec6}. The paper is wrapped-up with future research in  Section \ref{sec7}.\\

\noindent {\bf Notation.} $I_n$ is the $n \times n$ identity matrix. $\rea_{>0}$, $\rea_{\geq 0}$, $\intnum_{>0}$ and $\intnum_{\geq 0}$ denote the positive and non-negative real and integer numbers, respectively. For $x \in \rea^n$, we denote $|x|^2:=x^\top x$. Continuous-time (CT) signals $s:\rea_{\geq 0} \to \rea$ are denoted $s(t)$, while for discrete-time (DT) sequences $s:\intnum_{\geq 0} \to \rea$ we use $s(k):=s(kT_s)$, with $T_s \in \rea_{> 0}$ the sampling time. The action of an operator $\mathcal H:\callinf \to \callinf$ on a CT signal $u(t)$ is denoted $\mathcal H[u](t)$, while for an operator $\calh:\litcallinf \to \litcallinf$ and a sequence $u(k)$ we use  $\mathcal H[u](k)$. When a formula is applicable to CT signals and DT sequences the time argument is omitted. 

%
\section{Background Material}
\lab{sec2}
%
We deal with the problem of on-line estimation of the unknown, constant parameters $\theta \in \rea^m$ appearing in a LRE of the form
\begequ
\lab{lre}
y=\phi^\top\theta + \et
\endequ 
where $y \in \rea$ and $\phi \in \rea^m$ are {\em measurable} CT or DT signals and $\et$ is a (generic) exponentially decaying signal.\footnote{This signal may be stemming from the effect of the initial conditions of various filters used to generate the LRE.}  It is well-known that the availability of a LRE of the form \eqref{lre} is instrumental for the development of most system identifiers and adaptive controllers \cite{SASBOD}. Following standard practice, throughout the paper, the term $\et$ is omitted.
\subsection{Gradient estimator and the PE condition}
\lab{subsec21}
%
In this subsection we recall the well-known gradient estimator, derive its parameter error equation (PEE) and recall its stability properties. Although this material is very well-known, it is included to make the document self-contained and set up the notation.  First, we introduce the following.

\begin{definition} \em
\lab{def1} 
A bounded signal $\phi\in \rea^{m}$ is PE (denoted $\phi \in PE$) if there exist $\alpha \in \rea_{>0}$ such that
$$
\int_t^{t+T}\phi(\tau)\phi^\top(\tau) d\tau \geq \alpha I_m,\;\forall t \in \rea_{\geq 0},
$$ 
for some $T \in \rea_{> 0}$ in CT or
$$  
\sum_{j=k+1}^{k + K} \phi(j) \phi^\top(j) \geq \alpha I_m,\;\forall k \in \intnum_{\geq 0},
$$
for some $K \in \intnum_{> 0}$, with $K \geq m$, in DT.
\qed
\end{definition}

The following proposition is a milestone for systems theory and may be found in all identification and adaptive control textbooks, {\em e.g.}, \cite{SASBOD}.

\begin{proposition}\em
\lab{pro1}
Consider the LRE  \eqref{lre}.\\

\noindent {\bf (CT)} The CT gradient-descent estimator
\begequ
\lab{graest}
\dothatthe(t)=\gamma \phi(t)[ y(t)-\phi^\top(t) \hatthe(t)],
\endequ
with $\gamma>0$ ensures the following.

\bul The {\em norm} of the parameter error vector $\tilthe:=\hatthe-\theta$ is monotonically non-increasing, that is,
\begequ
\lab{normonct}
|\tilde\theta(t_b)| \leq |\tilde\theta(t_a)|,\;\forall t_b \geq t_a \in \rea_{\geq 0}. 
\endequ
\bul The CT PEE is given by
$$
\dot{\tilde{\theta}}(t)=-\gamma\phi(t)\phi^{\top}(t)\tilde\theta(t),
$$ 
and its zero equilibrium is globally {exponentially} stable (GES) {\em if and only if} $\phi(t) \in PE$. Moreover, there exist an optimal value of $\gamma$ for which the rate of convergence is {\em maximum}.

\noindent {\bf (DT)} The DT gradient-descent estimator
$$
\hat{\theta}(k) = \hat{\theta}(k-1) + {\phi(k) \over \gamma + |\phi(k)|^2} [y(k) - \phi^\top (k) \hat{\theta}(k-1)],
$$
ensures the following.

\bul The {\em norm} of the parameter error vector verifies
\begequ
\lab{normondt}
|\tilde{\theta}(k_b)| \leq  |\tilde{\theta}(k_a)|,\;\forall k_b \geq k_a \in \intnum_{\geq 0}. 
\endequ
\bul The DT PEE is given by
$$
\tilde{\theta}(k) =\bigg[I_m- {1 \over \gamma + |\phi(k)|^2}\phi(k) \phi^\top (k)\bigg] \tilde{\theta}(k-1) ,
$$ 
and its zero equilibrium is GES {\em if and only if}  $\phi(k) \in PE$.

\qed
\end{proposition}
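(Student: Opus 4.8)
The plan is to treat the CT and DT cases in parallel, establishing the norm-monotonicity, deriving the PEE, and then analyzing the stability of the PEE.

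First I would derive the parameter error equation. In the CT case, since $\theta$ is constant, $\dottilthe(t) = \dothatthe(t)$; substituting the estimator \eqref{graest} and using $y(t) = \phi^\top(t)\theta$ (recall $\et$ is dropped) gives $y(t) - \phi^\top(t)\hatthe(t) = -\phi^\top(t)\tilthe(t)$, hence $\dottilthe(t) = -\gamma\phi(t)\phi^\top(t)\tilthe(t)$. In the DT case the same substitution into the update law yields $\tilthe(k) = [I_m - \frac{1}{\gamma+|\phi(k)|^2}\phi(k)\phi^\top(k)]\tilthe(k-1)$. For norm-monotonicity in CT, consider $V(t) := |\tilthe(t)|^2 = \tilthe^\top(t)\tilthe(t)$; then $\dot V(t) = -2\gamma|\phi^\top(t)\tilthe(t)|^2 \le 0$, so $V$ is non-increasing, which is exactly \eqref{normonct}. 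In DT, with $M(k) := I_m - \frac{1}{\gamma+|\phi(k)|^2}\phi(k)\phi^\top(k)$, one checks that $M(k)$ is symmetric with eigenvalues in $[\,\frac{\gamma}{\gamma+|\phi(k)|^2},\,1\,] \subset (0,1]$, hence $\|M(k)\|\le 1$ in the spectral norm, giving $|\tilthe(k)| \le |\tilthe(k-1)|$ and thus \eqref{normondt}.

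For the equivalence ``GES $\iff$ $\phi \in PE$,'' I would invoke the classical theorem on exponential stability of the averaged gradient flow; this is the part I would cite rather than prove from scratch, but the skeleton is as follows. Sufficiency: if $\phi \in PE$, use the Lyapunov function $V(t) = |\tilthe(t)|^2$ together with the integral excitation bound over windows of length $T$ to show $V(t+T) \le \rho V(t)$ for some $\rho \in (0,1)$ (a standard computation: over one window, if $\tilthe$ did not decay, then $\phi^\top\tilthe$ would be small throughout, contradicting the PE lower bound), which yields geometric decay and hence GES; an analogous window argument with the sum over $K$ samples handles DT. Necessity: if $\phi \notin PE$, then there is a direction along which the integral (resp. sum) of $\phi\phi^\top$ fails to accumulate, and one constructs an initial error $\tilthe(0)$ in (approximately) that direction for which $V$ does not decay at the required uniform rate, contradicting GES. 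The statement about an optimal $\gamma$ maximizing the convergence rate follows by writing the guaranteed rate as a function of $\gamma$ and the PE constants $\alpha, T$ and optimizing: small $\gamma$ slows the flow while large $\gamma$ can be absorbed suboptimally, so the rate bound is maximized at an interior point.

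The main obstacle is the necessity direction of the equivalence: showing that the absence of PE genuinely obstructs GES (not merely the particular Lyapunov estimate) requires the careful construction of a bad initial condition exploiting the failure of the uniform integral lower bound, and in the DT case one must additionally be attentive to the normalization $\frac{1}{\gamma+|\phi(k)|^2}$ and the hypothesis $K \ge m$. Since this is textbook material (e.g., \cite{SASBOD}), in the write-up I would present the PEE derivations and the monotonicity computations in full and defer the PE equivalence to the cited references.
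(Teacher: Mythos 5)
Your derivations of the CT and DT parameter error equations and the norm-monotonicity arguments (the Lyapunov function $|\tilde\theta|^2$ in CT, the spectral bound on $I_m-\frac{1}{\gamma+|\phi(k)|^2}\phi(k)\phi^\top(k)$ in DT) are correct, and your decision to defer the equivalence between GES and $\phi\in PE$ to the standard references is exactly what the paper does: it offers no proof of Proposition~\ref{pro1} at all, presenting it as classical background material and citing \cite{SASBOD}. So your proposal is correct and, where the paper takes a stance, takes the same one.
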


In most applications, PE is an extremely restrictive condition, hence the interest of relaxing it.  See \cite{ORTetaltac} for a recent review of new estimators relaxing the PE condition, which include the ones reported in \cite{CHOetal,CHOetalijacsp,ROYetal}.
%
\subsection{Generation of $m$ scalar LRE via DREM}
\lab{subsec22}
%
To overcome the limitation imposed by the PE condition and improve the transient performance of the estimator the DREM procedure, introduced in \cite{ARAetaltac,BELetalsysid}, generates $m$ new, one--dimensional, LRE to  {\em independently} estimate each of the parameters. The first step in DREM is to introduce a {\em linear, single-input $m$-output, bounded-input bounded-output (BIBO)--stable} operator $\calh$ and define the vector $Y \in \rea^m$ and the matrix $\Phi \in \rea^{m \times m}$ as
\begequ
\label{yphi}
\begin{aligned}
Y & := \calh[y],\;\Phi :=\calh[\phi^\top].
\end{aligned}
\endequ
Clearly, because of linearity and BIBO stability, these signals satisfy
\begequ
\label{extlre}
Y = \Phi \theta.
\endequ 
At this point the key step of regressor ``mixing" of the DREM procedure is used to obtain a set of $m$ {\em scalar} equations as follows. First, recall that, for any (possibly singular) $m \times m$ matrix $M$ we have \cite{LANTIS} $\adj\{M\} M=\det\{M\}I_m$, where $\adj\{\cdot\}$ is the adjunct (also called ``adjugate") matrix. Now, multiplying from the left the vector equation \eqref{extlre} by the {\em adjunct matrix} of $\Phi$, we get
\begequ
\label{scalre}
\caly_i = \Delta \theta_i,\quad i \in \{1,2,\dots,m\}
\endequ
where we have defined the scalar function $\Delta \in \rea$
\begequ
\label{del}
\Delta :=\det \{\Phi\},
\endequ
and the vector $\caly \in \rea^m$
\begequ
\label{caly}
\caly := \adj\{\Phi\} Y.
\endequ

\begrem
\lab{rem1}
In \cite{LIO} an extended regressor like \eqref{extlre} has been constructed in CT using linear time-invariant (LTI) filters in the operator $\calh$ used in \eqref{yphi}---see also \cite{KRE}, where this modification is also discussed. Unfortunately, besides some simulation evidence, no quantitative advantage---with respect to the gradient estimation---has been established for it.
\endrem
\subsection{Properties of gradient parameter estimators in DREM}
\lab{subsec23}
%
The availability of the scalar LREs \eqref{scalre} is the main feature of DREM that distinguishes it with respect to all other estimators. Indeed, as shown in the propostion below---the proof of which may be found in \cite{ARAetaltac,BELetalsysid}---it allows obtaining significantly stronger results using simple gradient estimators.

\begin{proposition}\em
\lab{pro2}
Consider the scalar LREs  \eqref{scalre}.\\

\noindent {\bf (CT)} The CT gradient-descent estimators\footnote{In the sequel, the quantifier $i \in \{1,2,\dots,m\}$ is omitted for brevity.} 
\begequ
\label{ctest}
\dot{\hat{\theta}}_i(t) = \gamma_i\Delta(t)[\caly_i(t) - \Delta(t) \hat\theta_i(t)],
\endequ
with $\gamma_i \in \rea_{>0}$ ensures the following.

\bul The CT PEEs are given by
\begequ
\lab{peect}
\dot {\tilde \theta}_i(t)=-\gamma_i \Delta^2(t) \tilde\theta_i(t).
\endequ

\bul The {\em individual} parameter errors are monotonically non-increasing, that is,
$$
|\tilde\theta_i(t_b)| \leq |\tilde\theta_i(t_a)|,\;\forall t_b \geq t_a \in \rea_{\geq 0}. 
$$
\bul The following {\em equivalence} holds
$$
\liminf \tilde \theta_i(t)=0~~\Leftrightarrow~~\Delta(t) \notin \call_2,
$$ 
and convergence can be made {\em arbitrarily fast} increasing $\gamma_i$.

\bul If $\Delta(t) \in PE$, the convergence is {\em exponential}.\\

\noindent {\bf (DT)} The DT gradient-descent estimator
\begequ
\label{dtest}
\hat{\theta}_i(k) = \hat{\theta}_i(k-1) + {\Delta(k) \over \gamma_i + \Delta^2(k)} [\caly_i(k) - \Delta(k) \hat{\theta}_i(k-1)],
\endequ
ensures the following.

\bul The DT PEEs are given by
\begequ
\lab{peedt}
{\tilde \theta}_i(k)={1 \over 1 +{\Delta^2(k) \over  \gamma_i}} \tilde\theta_i(k-1).
\endequ
\bul The elements of the parameter error vector verify
\begequ
\lab{normondrem}
|\tilde{\theta}_i(k_b)| \leq  |\tilde{\theta}_i(k_a)|,\;\forall k_b \geq k_a \in \intnum_{\geq 0}. 
\endequ

\bul The following {\em equivalence} holds
$$
\liminf \tilde \theta_i(k)=0~~\Leftrightarrow~~\Delta(k) \notin \ell_2,
$$ 
and convergence can be made {\em arbitrarily fast} decreasing $\gamma_i$.

\bul If $\Delta(k) \in PE$, the convergence is {\em exponential}.
\qed
\end{proposition}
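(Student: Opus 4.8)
The plan is to handle the CT and DT statements by the same three-step recipe: (a) obtain the parameter error equation (PEE) by substituting the scalar identity \eqref{scalre} into the estimator; (b) solve the resulting \emph{scalar} time-varying linear recursion (resp.\ ODE) in closed form; and (c) read off monotonicity, the $\call_2$/$\ell_2$ convergence equivalence, and the exponential rate under PE directly from that closed-form solution. Throughout, the term $\et$ is dropped, as per the paper's standing convention.

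\emph{Continuous time.} I would substitute $\caly_i=\Delta\theta_i$ from \eqref{scalre} into \eqref{ctest}. Since $\theta_i$ is constant, $\dot{\tilde\theta}_i=\dot{\hat\theta}_i=\gamma_i\Delta(\Delta\theta_i-\Delta\hat\theta_i)=-\gamma_i\Delta^2\tilde\theta_i$, which is exactly \eqref{peect}. This scalar linear ODE integrates to $\tilde\theta_i(t)=\tilde\theta_i(0)\exp\!\big(-\gamma_i\int_0^t\Delta^2(\tau)\,d\tau\big)$. The exponent is non-positive and non-increasing in $t$, which gives the monotonicity $|\tilde\theta_i(t_b)|\le|\tilde\theta_i(t_a)|$ for $t_b\ge t_a$. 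For the equivalence, $\tilde\theta_i(t)\to0$ iff $\int_0^t\Delta^2(\tau)\,d\tau\to\infty$, i.e.\ iff $\Delta\notin\call_2$; and since the decay rate is proportional to $\gamma_i$, convergence is made arbitrarily fast by enlarging $\gamma_i$. Finally, $\Delta\in PE$ yields $\int_0^t\Delta^2(\tau)\,d\tau\ge\alpha\lfloor t/T\rfloor\ge\alpha(t/T-1)$, so $|\tilde\theta_i(t)|\le e^{\gamma_i\alpha}e^{-(\gamma_i\alpha/T)t}|\tilde\theta_i(0)|$, i.e.\ the zero equilibrium is GES.

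\emph{Discrete time.} The analogous substitution of $\caly_i(k)=\Delta(k)\theta_i$ into \eqref{dtest}, followed by subtracting the constant $\theta_i$, collapses the right-hand side to $\tilde\theta_i(k)=\big(1-\tfrac{\Delta^2(k)}{\gamma_i+\Delta^2(k)}\big)\tilde\theta_i(k-1)=\tfrac{1}{1+\Delta^2(k)/\gamma_i}\tilde\theta_i(k-1)$, which is \eqref{peedt}; each factor lies in $(0,1]$, giving \eqref{normondrem}. Iterating, $\tilde\theta_i(k)=\tilde\theta_i(0)\prod_{j=1}^k\big(1+\Delta^2(j)/\gamma_i\big)^{-1}$, so $\tilde\theta_i(k)\to0$ iff $\sum_{j\ge1}\log\big(1+\Delta^2(j)/\gamma_i\big)=\infty$. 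Here I would invoke boundedness of $\Delta=\det\Phi$---which follows from $\phi$ bounded and $\calh$ BIBO-stable, so that $\Phi=\calh[\phi^\top]$ is bounded---to conclude that $\log(1+\Delta^2/\gamma_i)$ and $\Delta^2$ are of the same order, whence the series diverges iff $\Delta\notin\ell_2$. Decreasing $\gamma_i$ shrinks each factor, making convergence arbitrarily fast. And $\Delta\in PE$ gives, on every window of length $K$, $\prod_{j=k+1}^{k+K}\big(1+\Delta^2(j)/\gamma_i\big)\ge1+\tfrac{1}{\gamma_i}\sum_{j=k+1}^{k+K}\Delta^2(j)\ge1+\alpha/\gamma_i$, i.e.\ a uniform per-window contraction factor $\rho:=(1+\alpha/\gamma_i)^{-1}<1$, hence $|\tilde\theta_i(nK)|\le\rho^n|\tilde\theta_i(0)|$ and exponential convergence.

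The substitutions and the closed-form solutions are routine. The step needing a little care is the convergence equivalence in DT: moving between $\sum\log(1+\Delta^2(j)/\gamma_i)$ and $\sum\Delta^2(j)$ relies on boundedness of $\Delta$, which in turn rests on the standing BIBO-stability of $\calh$ together with boundedness of $\phi$, and I would make this dependence explicit. The PE-implies-exponential directions are the only other place where the time-varying ``gain'' $\Delta^2$ must be handled with attention, but in both cases a single-window estimate (a Gronwall-type bound in CT, telescoping the product in DT) converts the integral/sum lower bound into a uniform geometric rate.
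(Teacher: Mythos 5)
Your proof is correct and follows exactly the argument the paper relies on (it defers the proof to \cite{ARAetaltac,BELetalsysid}, where the same route is taken): substitute the scalar LRE into the estimator to get the decoupled scalar PEE, solve it in closed form, and read off monotonicity, the $\call_2$/$\ell_2$ equivalence, and the geometric rate under PE. Your explicit handling of the $\log(1+\Delta^2/\gamma_i)$ versus $\Delta^2$ comparison in the DT case, resting on boundedness of $\Delta$ from BIBO stability of $\calh$ and boundedness of $\phi$, is a sound and appropriately careful touch.
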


There are three important advantages of DREM over the standard gradient estimator.

\noindent {\bf P1} As shown in \eqref{normondrem} the {\em individual} parameter errors are monotonically non-increasing, a property that is {\em strictly stronger} than monotonicity of their norm indicated in \eqref{normonct} and \eqref{normondt}. 

\noindent {\bf P2}  Parameter convergence is established {\em without} the restrictive PE assumption---being replaced, instead, by a non square-integrability/summability assumption.  

\noindent {\bf P3} Convergence rates of DREM can be made {\em arbitrarily fast} simply increasing $\gamma_i$ in CT (or decreasing it in DT). 

\begrem
\lab{rem2}
Regarding the property {\bf P2}, in \cite{ARAetaltac} the relationship in CT between the conditions $\phi(t) \in PE$ and $\Delta(t) \notin \call_2$ is thoroughly discussed. In particular, in \cite{ARAetaltac} it has been shown that, for {\em arbitrary} regressor vectors $\phi(t)$, these conditions are unrelated. On the other hand, for the case of identification of LTI systems, it has been shown in \cite{BELetal} that $\phi(t) \in PE$ if and only if $\Delta(t) \in PE$ for {\em almost all} LTI operators $\calh$.
\endrem
%
\section{A DREM Estimator with Strictly Weaker Convergence Conditions}
\lab{sec3}
%
In this section we present a particular version of DREM for which it is possible to show that its convergence conditions are {\em strictly weaker} than $\phi \in PE$. Since the construction, and the results, are very similar for CT and DT estimators, for brevity, we consider the latter case only.
\begin{proposition}\em
\lab{pro3}
Consider the DT version of the LRE \eqref{lre}. Fix an integer $\bar K \geq m$ and define  \eqref{yphi} using the LTV operator 
$$
\calh:=\begmat{\phi(k-1)& \phi(k-2) & \cdots \phi(k-\bar K)}\begmat{q^{-1}\\ q^{-2} \\ \vdots \\ q^{-\bar K}}.
$$ 
Assume $\phi(k) \in PE$ and $\bar K \geq K$, with $K$ the size of the window given in Definition \ref{def1}. The scalar, gradient-descent DT estimators \eqref{dtest}, with $\Delta(k)$ and $\caly(k)$ defined in \eqref{del} and \eqref{caly}, ensure the following additional properties.

\bul  The condition for parameter convergence of DREM, {\em i.e.}, $\Delta(k) \not \in \ell_2$, is {\em strictly weaker} than $\phi(k) \in PE$. More precisely, the following implications hold:
\begali{
\lab{phipeimpdelnotl2}
\phi(k)  \in PE & \;\Rightarrow\;\Delta(k) \not \in \ell_2,\\
\lab{delnotl2notimpphipe}
\Delta(k) \not \in \ell_2\; & \not \Rightarrow \phi(k) \in PE.
}
\bul The condition for {\em exponential} parameter convergence of DREM, {\em i.e.}, $\Delta(k) \in PE$, is also weaker than $\phi(k) \in PE$ in the following precise sense
\begali{
\lab{phipeimpdelpe}
\phi(k)  \in PE & \;\Rightarrow\;\Delta(k) \in PE,\\
\lab{delpenotimpphipe}
\Delta(k)  \in PE\; [K \geq 2] & \not \Rightarrow \phi(k) \in PE\; [K \leq \bar K]
}
\end{proposition}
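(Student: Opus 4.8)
The plan is to start by making the extended regressor explicit. For the proposed operator one has, straight from \eqref{yphi},
$$
\Phi(k)=\calh[\phi^\top](k)=\sum_{j=1}^{\bar K}\phi(k-j)\phi^\top(k-j),\qquad
\Delta(k)=\det\Big\{\sum_{j=1}^{\bar K}\phi(k-j)\phi^\top(k-j)\Big\},
$$
so $\Phi(k)$ is precisely the Gram matrix of the regressor over the sliding window of length $\bar K$ from Definition \ref{def1}, and the whole statement reduces to comparing \emph{uniform positive definiteness of that Gram matrix} (which is exactly $\phi\in PE$ when the PE window satisfies $K\le\bar K$) with \emph{the positivity of its determinant $\Delta$}.

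For the two implications \eqref{phipeimpdelnotl2} and \eqref{phipeimpdelpe} I would argue as follows. If $\phi(k)\in PE$ with window $K\le\bar K$, the index block $\{k-\bar K,\dots,k-1\}$ contains $K$ consecutive indices, and since the discarded rank-one terms are positive semidefinite, $\Phi(k)\ge\alpha I_m$ for all $k$ past the filter's initial transient ($k\le\bar K$, which does not affect the asymptotic $\ell_2$/PE properties at issue). Hence $\Delta(k)=\det\Phi(k)\ge\alpha^m>0$; combined with boundedness of the entries of $\Phi(k)$ (because $\phi$ is bounded), this already yields $\Delta(k)\in PE$, and a fortiori $\Delta(k)\notin\ell_2$. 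Both implications follow at once.

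The two non-implications are each settled by a single counterexample. For \eqref{delnotl2notimpphipe} I would take $m=1$ and the decaying regressor $\phi(k)=(k+1)^{-1/4}$: it is not PE since $\phi(k)\to0$, while $\Delta(k)=\sum_{j=1}^{\bar K}(k-j+1)^{-1/2}\ge\bar K\,k^{-1/2}$ gives $\sum_k\Delta^2(k)=\infty$, i.e.\ $\Delta\notin\ell_2$ (for general $m$, use $(k+1)^{-1/(4m)}$ times a direction cycling through $e_1,\dots,e_m$). For \eqref{delpenotimpphipe} I would first record why the qualifier $[K\ge2]$ is essential: if $\Delta$ were PE with window $1$ then $\Delta(k)\ge c>0$ for all $k$, and boundedness of $\Phi(k)$ would force $\lambda_{\min}(\Phi(k))\ge c/\bar\lambda^{m-1}>0$ (with $\bar\lambda$ bounding the eigenvalues of $\Phi(k)$), i.e.\ $\phi\in PE$ with window $\bar K$; so the counterexample must have $\Delta$ excited only in the averaged, $\ge2$-step sense. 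I would then take $m=\bar K=2$ and $\phi$ equal to $e_1,e_2,e_2$ repeated $3$-periodically ($e_1,e_2$ the canonical basis of $\rea^2$): the window $\{\phi(k-1),\phi(k-2)\}$ hits the rank-deficient pair $\{e_2,e_2\}$ once per period, so $\phi$ is PE only with window $3>\bar K$, hence not PE with any window $\le\bar K$; meanwhile $\Delta(k)=\big(\phi_1(k-1)\phi_2(k-2)-\phi_2(k-1)\phi_1(k-2)\big)^2$ runs through the $3$-periodic values $1,0,1$, so every length-$2$ window carries $\Delta^2$-mass at least $1$ and $\Delta\in PE$ with window $2$ (but not with window $1$). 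The same device---replacing the single extra $e_2$ by a block of $\bar K$ copies of $e_m$ inside a period of length $\bar K+m$---covers general $m$ and $\bar K$.

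I expect the main obstacle to be exactly \eqref{delpenotimpphipe}: the implications \eqref{phipeimpdelnotl2}--\eqref{phipeimpdelpe} are the routine ``determinant of a uniformly positive definite, bounded matrix stays bounded away from zero,'' but for the non-implication one must design a regressor whose \emph{fixed}-length-$\bar K$ Gram determinant is persistently exciting in the $K\ge2$ averaged sense while its length-$\bar K$ Gram matrix is periodically singular, all the while keeping straight which window size refers to $\phi$ and which to $\Delta$. The periodic construction is the simplest mechanism that decouples these, and verifying that it meets all the bracketed constraints simultaneously is where the substance of the statement lies.
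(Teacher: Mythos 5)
Your proposal is correct and, for most of the statement, follows the same route as the paper: the paper's proof also rests on the single key identity $\Phi(k)=\sum_{j=k+1}^{k+\bar K}\phi(j-(1+\bar K))\phi^\top(j-(1+\bar K))$, which is your sliding-window Gram matrix $\sum_{j=1}^{\bar K}\phi(k-j)\phi^\top(k-j)$ up to a shift of the summation index; the implications \eqref{phipeimpdelnotl2} and \eqref{phipeimpdelpe} are deduced from it exactly as you do (PE over a window of size $K$ implies PE over any window of size $\bar K\geq K$, hence $\Phi(k)\geq\alpha I_m$ and $\Delta(k)\geq\alpha^m$); and the counterexample for \eqref{delnotl2notimpphipe} is literally the same, $\phi(k)=(k+1)^{-1/4}$ with $\bar K=1$. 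The genuine divergence is in \eqref{delpenotimpphipe}. The paper argues abstractly: it rewrites $\Delta(k)\in PE$ $[K\geq 2]$ as positivity of $\sum_{j=k+1}^{k+K}\prod_{i=1}^m\lambda_i^2\{\Phi(j)\}$ and observes that positivity of a sum of nonnegative terms does not force each term---hence each $\Phi(k)$---to be positive definite; no explicit regressor witnessing the failure is exhibited, and Remark \ref{rem4} separately records that the qualifiers $K\geq2$ and $K\leq\bar K$ are indispensable. You instead produce a concrete witness (the $3$-periodic $\phi\in\{e_1,e_2,e_2\}$ with $m=\bar K=2$, for which $\Delta$ cycles through $1,0,1$ and is PE with window $2$ but not $1$, while the length-$2$ Gram matrix is periodically singular) together with the quantitative reason the window-$1$ case cannot furnish a counterexample. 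Since a non-implication strictly requires an instance where the hypothesis holds and the conclusion fails, your treatment of \eqref{delpenotimpphipe} is more complete than the paper's, at the cost of some bookkeeping; the paper's eigenvalue chain is shorter and makes transparent \emph{which} logical step breaks, but leaves the existence of a realizing regressor implicit.
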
 
\begin{proof}
To prove the claims we make the key observation that
\begequ
\lab{keyide}
\Phi(k)=\sum_{j=k+1}^{k + \bar K} \phi(j-(1+\bar K)) \phi^\top(j-(1+\bar K)).
\endequ 
The implications \eqref{phipeimpdelnotl2} and  \eqref{phipeimpdelpe} follow using the identity \eqref{keyide}, Definition \ref{def1} and noting the obvious fact that if $\phi(k) \in PE$ in a window of size $K$, then it is also PE for any window of size $\bar K \geq K$.

The proof of \eqref{delnotl2notimpphipe} is established with the following scalar counterexample: $\phi(k)=(k+1)^{-\frac{1}{4}}$ with $\bar{K}=1$. Since $\phi(k)$ tends to zero it is not PE, however, $\Delta(k) = (k+1)^{-\hal} \notin \ell_2.$

Finally, the proof of \eqref{delpenotimpphipe} is established with the following chain of implications: 
\begalis{
&\Delta(k) \in PE\; [\mbox{with\;}K \geq 2]  \Leftrightarrow\;\sum_{j=k+1}^{k + K}  \Delta^2(j) > 0,\;\forall k \in \intnum_{\geq 0}\\
                 & \Leftrightarrow\;\sum_{j=k+1}^{k + K} \prod_{i=1}^{m}\lambda^2_i\{\Phi(j)\} > 0,\;\forall k \in \intnum_{\geq 0}\\
                  & \Leftrightarrow \;\prod_{i=1}^{m}\lambda^2_i\{\Phi(k\!+\!1)\}\!+\!\dots\!+ \!\prod_{i=1}^{m}\lambda^2_i\{\Phi(k\!+\!K) \} > 0,\forall k \in \intnum_{\geq 0}\\
                    & \not \Rightarrow\; \lambda_i\{\Phi(k)\} > 0, \forall i \in \{1,\dots,m\},\;\forall k \in \intnum_{\geq 0}\\
                   & \Leftrightarrow\;\Phi(k) > 0,\;\forall k \in \intnum_{\geq 0}\\
                  & \Leftrightarrow\;\phi(k) \in PE\; [\mbox{with\;}K \leq \bar K],
}
where $\lambda_i\{\cdot\}$ denotes eigenvalues and in the third implication we have used the fact that $K>1$.
\end{proof}

\begrem
\lab{rem4}
The qualifiers $K \geq 2$ and $K \leq \bar K$ in \eqref{delpenotimpphipe} are necessary to complete the proof. Actually, it can be shown that, without these qualifiers,  [$\Delta(k)  \in PE \Rightarrow \phi(k) \in PE$]. 
\endrem
%
\section{Some Specific Choices of the Operator $\calh$}
\label{sec4}
%
In the reported literature of DREM we have considered the use of simple first-order, LTI filters or pure delays in the vector operator $\calh$, see \cite{ORTetalaut} for a discussion on LTV operators. One of the main contributions of the paper is to propose a general form for these operators and give an {\em explicit} choice, that ensures a quantifiable transient performance improvement of the estimator. Another advantage of these general operators is that, as a particular case, we obtain the extended LRE proposed in \cite{KRE} for adaptive state observation---referred in the sequel as Kreisselmeier's regressor extension (KRE).
%
\subsection{A general LTV operator $\calh$}
\lab{subsec41}
%
In this subsection we propose to generate $Y$ and $\Phi$ in \eqref{yphi} using, as elements of the operator $\calh$, the single-input single-output (SISO) LTV operators
\begequ
\lab{genhi}
\calh_i:=c_i^\top (\mathfrak{d}I_{n_i} - A_i)^{-1}b_i + d_i + \mu_i \cald_{i},
\endequ
where $A_i \in \rea^{n_i \times n_i}$, $b_i,c_i \in \rea^{n_i}$, $d_i,\mu_i \in \rea$ are {\em time-varying}, $n_i \in \intnum_{\geq 0}$ and the action of the operators $\mathfrak{d}$ and $\cald_{i}$ is defined as
$$
\mathfrak{d}[u]=\left\{ \begarr{ccl} {d u(t)\over dt}=:p[u](t) & \mbox{in} & CT \\ &&\\ u(k+1)=:q[u](k) & \mbox{in} & DT \endarr \right.
$$
and 
$$
\cald_{i}[u]=\left\{ \begarr{ccl} u(t-T_i),\;T_i \in \rea_{\geq 0} & \mbox{in} & CT \\ &&\\ q^{-K_i}u(k),\;K_i \in \intnum_{\geq 0}  & \mbox{in} & DT, \endarr \right.
$$
respectively. The triplets $(A_i,b_i,c_i)$ should define BIBO stable systems and all matrices are bounded.

The state-space realizations of the SISO, BIBO stable subsystems $z=\calh_i[u]$ are, clearly, given as
\begalis{
\dot x_i(t) & = A_i(t)x_i(t)+b_i(t)u(t) \\ z(t) & =c_i^\top (t) x_i(t)+d_i(t)u(t)+\mu_i(t)u(t-T_i),
}
in CT, and
\begalis{
x_i(k+1) & = A_i(k)x_i(k)+b_i(k)u(k) \\
z(k) & =c_i^\top (k) x_i(k)+d_i(k)u(k)+\mu_i(k)u(k-K_i),
}
in DT, with $x_i \in \rea^{n_i}$ the corresponding state. In view of the equivalence between GES and BIBO-stability for LTV systems with bounded realization matrices, these state-space systems are GES.

\begrem
\lab{rem6}
The LTV operators \eqref{genhi} are a generalization of the first order LTI ones or simple delays considered in the reported literature of DREM. LTV operators are also considered in \cite{ORTetalaut} to give a Luenberger observer interpretation of DREM.
\endrem
\subsection{Kreisselmeier's regressor extension}
\lab{subsec42}
%
The construction of the KRE  of \cite{ORTetaltac} proceeds as follows. Premultiplying \eqref{lre} by $\phi$ we obtain
$$
\phi y = \phi \phi^\top \theta,
$$
to which we can apply a {\em SISO}, linear, BIBO-stable operator $\calk$ to obtain the new, {\em matrix} LRE
\begequ
\lab{zomethe}
Z=\Omega \theta,
\endequ
where we have defined
\begali{
Z:=\calk[\phi y] \in \rea^m,\;\Omega :=\calk[\phi \phi^\top] \in \rea^{m \times m}.
\label{om}
}
Comparing \eqref{yphi}, \eqref{extlre} with  \eqref{zomethe}, \eqref{om} we see that the difference between DRE and KRE is that, in the first case, $Y$ and $\Phi$ are obtained filtering---with $m$ different filters---$y$ and $\phi$, and piling-up the filtered signals, while in the latter we filter $\phi y$ and $\phi \phi^\top $ with one filter. 

The proposition below, the proof of which is obtained via a direct calculation, shows that KRE is a particular case of the DRE construction using the generalized operators \eqref{genhi}.\footnote{The first author thanks Bowen Yi for bringing this fact to his attention.} 

\begin{proposition}\em
\lab{pro4}
Define  \eqref{yphi} using  \eqref{genhi} with
$$
n_i=1,\;c_i=1,\;A_i=-a,\;b_i=\phi_i,\;d_i=0,\;\mu_i=0.
$$  
Then, $Z=Y$ and $\Omega=\Phi$ as defined in \eqref{om}.
\qed
\end{proposition}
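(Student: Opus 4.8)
The plan is to unwind the definitions: the prescribed choice of parameters collapses each SISO operator $\calh_i$ in \eqref{genhi} to a single common scalar filter $\calk := (\mathfrak{d}+a)^{-1}$ acting on the \emph{product} $\phi_i u$, and from there the identities $Z=Y$, $\Omega=\Phi$ are pure bookkeeping. This is the "direct calculation" announced in the text.

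First I would substitute $n_i=1$, $c_i=1$, $A_i=-a$, $b_i=\phi_i$, $d_i=0$, $\mu_i=0$ into \eqref{genhi}: the rational part becomes $c_i^\top(\mathfrak{d}I_{n_i}-A_i)^{-1}b_i=(\mathfrak{d}+a)^{-1}\phi_i$ and the feedthrough and delay terms vanish. Equivalently — and this is the cleanest way to read off the action — the state-space realization displayed below \eqref{genhi} reduces, in both CT and DT, to $\mathfrak{d}[x_i]=-a\,x_i+\phi_i u$, $z=x_i$. Hence, for every scalar input $u$, $\calh_i[u]=\calk[\phi_i u]$, with $\calk$ the common SISO, BIBO-stable operator $(\mathfrak{d}+a)^{-1}$.

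Next I would evaluate $Y$ and $\Phi$ from \eqref{yphi}. For $Y$: componentwise, $Y_i=\calh_i[y]=\calk[\phi_i y]$; since the entries of the vector $\phi y$ are exactly the $\phi_i y$ and $\calk$ acts entrywise, stacking the $m$ identities yields $Y=\calk[\phi y]$, which is $Z$ by \eqref{om}. For $\Phi$: its $(i,j)$ entry is $\Phi_{ij}=\calh_i[\phi_j]=\calk[\phi_i\phi_j]$; since $\phi_i\phi_j$ is precisely the $(i,j)$ entry of $\phi\phi^\top$ and $\calk$ acts entrywise on matrices, this is $\Phi=\calk[\phi\phi^\top]=\Omega$, again by \eqref{om}. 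Therefore $Z=Y$ and $\Omega=\Phi$.

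I do not expect a genuine obstacle here. The only points needing a little care are notational: one must use the convention that the single-input, $m$-output operator $\calh$ (and the scalar $\calk$ applied to a matrix-valued signal) acts entrywise, so that the DREM step of piling up the $m$ scalar filter outputs literally coincides with the entrywise action of $\calk$; and one must note that because the \emph{same} pole $-a$ (hence the same $\calk$) appears in every $\calh_i$, the two constructions produce the very same matrices $\Phi=\Omega$, not merely matrices agreeing row-by-row after relabeling. The conceptual content worth emphasizing — even though the algebra is trivial — is that moving the regressor into the input matrix, $b_i=\phi_i$, converts the \emph{linear} filtering of $(y,\phi^\top)$ underlying DREM into the \emph{bilinear} filtering of $(\phi y,\phi\phi^\top)$ underlying KRE.
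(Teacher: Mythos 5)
Your proposal is correct and is precisely the ``direct calculation'' the paper alludes to without writing out: substituting the given parameters collapses each $\calh_i$ to $\calk[\phi_i\,\cdot\,]$ with $\calk=(\mathfrak d+a)^{-1}$, so that $Y_i=\calk[\phi_i y]$ and $\Phi_{ij}=\calk[\phi_i\phi_j]$ match $Z$ and $\Omega$ entrywise. Your closing observation---that placing the regressor in $b_i$ turns the linear filtering of $(y,\phi^\top)$ into the bilinear filtering of $(\phi y,\phi\phi^\top)$---is exactly the right way to see why the identity holds.
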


\begrem
\lab{rem7}
The KRE construction was first proposed by Kreisselmeier in  \cite{KRE} for CT systems and the particular case 
\begequ
\lab{calk}
\calk(p)={ 1 \over p+a},\;a>0, 
\endequ
the state-space realization of which is
\begali{
\nonumber
\dot{\Omega}(t)&=-a \Omega(t)+\phi(t)\phi^{\top}(t),\\
\lab{dotz}
\dot{Z}(t)&=-a Z(t)+\phi(t) y(t).
}
The LRE \eqref{zomethe} is used in the recently proposed MRACs \cite{CHOetal,ROYetal}, see also \cite{ORTetaltac} for a survey of the literature. 
\endrem
\subsection{An operator $\calh$ with guaranteed transient performance improvement}
\lab{subsec43}
%
One important feature of DREM is that it is possible to get the explicit solution of the PEEs, fully characterizing the time evolution of the parameter errors. Indeed, for the CT PEE \eqref{peect} we have
\begali{
\lab{solctpee}
\tilde \theta_i(t) = e^{-\gamma_i \int_0^t \Delta^2(s)ds} \tilde \theta_i(0),
}
Similarly, for the DT PEE \eqref{peedt} we have
\begali{
\lab{soldtpee}
\tilde \theta_i(k) = \prod_{j=0}^k \bigg[{1 \over 1 +{\Delta^2(j) \over \gamma_i}}\bigg] \tilde\theta_i(0).
}
As seen from the two previous equations the transient performance of the DREM estimators is univocally determined by the ``size" of $\Delta^2$---with a faster convergence obtained with a ``larger" $\Delta^2$. To improve the transient behavior of the DREM estimator we propose in this subsection a particular selection of the feedforward gains $d_i$ in the LTV operators $\calh_i$ given in \eqref{genhi}. Since the result is the same for CT and DT estimators, for brevity, we consider below the former case only.

To streamline the presentation of the result we define the matrix
\begequ
\lab{phi0}
\Phi_0(t):=\begmat{ c_1^\top(t) [ pI_{n_1} - A_1(t)]^{-1} b_1(t) +  \mu_1(t) \cald_{1}\\ \vdots \\ c_n^\top(t) [ pI_{n_n} - A_n(t)]^{-1} b_n(t) +  \mu_n(t) \cald_{n} }[\phi^\top ](t).
\endequ
That is, the CT extended regressor matrix \eqref{extlre} generated with the operators \eqref{genhi} {\em with $d_i(t)=0$}. 

\begin{proposition}
\lab{pro5}\em
Consider the CT DREM estimator (\ref{ctest}) with the LRE \eqref{extlre} generated with the operators \eqref{genhi}. Denote by $\tilde \theta^0_i(t)$ the parameter errors corresponding to the choice of $d_i(t)=0$ and $\tilde \theta^{\tt N}_i(t)$ those corresponding to 
\begequ
\lab{d}
d(t)=\adj\{\Phi_0(t)\}\phi(t),
\endequ
with $\Phi_0(t)$ defined in \eqref{phi0}, all the remaining parameters of $\calh_i$ and the estimators initial conditions the same for both cases. Then
$$
|\tilde \theta^0_i(t)| > |\tilde \theta^{\tt N}_i(t)|,\;\forall t \in \rea_{\geq 0}.
$$
\end{proposition}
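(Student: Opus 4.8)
The plan is to leverage the closed-form solution \eqref{solctpee} of the parameter error equation, which turns the assertion into a comparison of the ``excitation energies'' $\int_0^t \Delta^2$. Write $\Delta^{0}:=\det\{\Phi_0\}$ and $\Delta^{\tt N}:=\det\{\Phi^{\tt N}\}$ for the determinants of the extended regressors obtained with $d_i\equiv0$ and with the choice \eqref{d}, respectively. Since the estimator initial conditions and all parameters of $\calh_i$ except the feedforward gains $d_i$ are the same in the two cases, \eqref{solctpee} gives
\[
|\tilde\theta^{\tt N}_i(t)| = e^{-\gamma_i\int_0^t(\Delta^{\tt N})^2(s)\,ds}\,|\tilde\theta_i(0)|,\qquad |\tilde\theta^{0}_i(t)| = e^{-\gamma_i\int_0^t(\Delta^{0})^2(s)\,ds}\,|\tilde\theta_i(0)|.
\]
As $x\mapsto e^{-x}$ is strictly decreasing, the claim reduces to $\int_0^t(\Delta^{\tt N})^2(s)\,ds > \int_0^t(\Delta^{0})^2(s)\,ds$ for every $t>0$, which I would in turn deduce from the pointwise bound $(\Delta^{\tt N})^2\ge(\Delta^{0})^2$ together with strictness on a positive-measure subset of each interval $[0,t]$.

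The second step is the algebraic link between the two determinants. From \eqref{genhi} and \eqref{phi0}, switching on the feedforward gains adds exactly $d_i(t)\phi^\top(t)$ to the $i$-th row of the extended regressor and changes nothing else, so
\[
\Phi^{\tt N}(t) = \Phi_0(t) + d(t)\phi^\top(t),\qquad d(t)=\col\big(d_1(t),\dots,d_m(t)\big).
\]
Applying the matrix determinant lemma $\det\{M+uv^\top\}=\det\{M\}+v^\top\adj\{M\}u$ --- a polynomial identity in the entries of $M$, $u$, $v$, hence valid even for singular $M$ --- and substituting the choice \eqref{d} yields
\[
\Delta^{\tt N} = \Delta^{0} + \phi^\top\adj\{\Phi_0\}\,d = \Delta^{0} + \phi^\top\big(\adj\{\Phi_0\}\big)^2\phi .
\]

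The final, and main, step is the sign analysis that promotes this identity to $(\Delta^{\tt N})^2\ge(\Delta^{0})^2$; this is the step I expect to be the real obstacle. In the situation that matters in this section --- in particular the KRE-type construction of Section~\ref{subsec42}, for which $\Phi_0=\calk[\phi\phi^\top]$ is symmetric positive semidefinite --- it closes immediately: $\adj\{\Phi_0\}$ is then symmetric positive semidefinite, so $\phi^\top\big(\adj\{\Phi_0\}\big)^2\phi = |\adj\{\Phi_0\}\phi|^2 = |d|^2\ge0$ and $\Delta^{0}=\det\{\Phi_0\}\ge0$, whence $\Delta^{\tt N}=\Delta^{0}+|d|^2\ge\Delta^{0}\ge0$ and squaring preserves the inequality, with equality only where $d(t)=0$. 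For a general non-symmetric $\Phi_0$ one must work harder: writing $\adj\{\Phi_0\}=\Delta^{0}\Phi_0^{-1}$ on $\{\Delta^{0}\neq0\}$ recasts the correction as $(\Delta^{0})^2\,\phi^\top\Phi_0^{-2}\phi$ and the difference of squares as $(\Delta^{0})^2\,\phi^\top\Phi_0^{-2}\phi\,\big(2\Delta^{0}+(\Delta^{0})^2\phi^\top\Phi_0^{-2}\phi\big)$, whose nonnegativity is no longer automatic, and I would expect the argument to lean (implicitly or explicitly) on $\Phi_0(t)$ being symmetric positive semidefinite, which is exactly the structure of the operators of interest here.

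To wrap up, integrating $(\Delta^{\tt N})^2\ge(\Delta^{0})^2$ over $[0,t]$, with the strict inequality holding on the positive-measure set where $d=\adj\{\Phi_0\}\phi$ does not vanish, gives $\int_0^t(\Delta^{\tt N})^2 > \int_0^t(\Delta^{0})^2$ for every $t>0$; substituting this into the two displays of the first paragraph yields $|\tilde\theta^{\tt N}_i(t)|<|\tilde\theta^{0}_i(t)|$, which is the asserted inequality.
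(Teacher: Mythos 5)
Your route is the same as the paper's: solve the PEE in closed form via \eqref{solctpee}, reduce the claim to comparing $\int_0^t\Delta^2$, write $\Phi=\Phi_0+d\phi^\top$, and expand the determinant of the rank-one update. The place where you diverge is precisely the place where you are more careful than the paper. You state the matrix determinant lemma correctly, $\det\{M+uv^\top\}=\det\{M\}+v^\top\adj\{M\}u$, which with $u=d$, $v=\phi$ and the choice \eqref{d} gives the correction term $\phi^\top\big[\adj\{\Phi_0\}\big]^2\phi$. The paper instead writes $\det\{\Phi\}=\det\{\Phi_0\}+d^\top\adj\{\Phi_0\}\phi$ (the adjugate should be transposed there) and thereby lands on the manifest square $|\adj\{\Phi_0\}\phi|^2\geq 0$. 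The two expressions coincide only when $\adj\{\Phi_0\}$ is symmetric --- e.g.\ in the KRE case of Subsection \ref{subsec42} --- and, as you observe, $\phi^\top\big[\adj\{\Phi_0\}\big]^2\phi$ can be negative for non-symmetric $\Phi_0$ (take $\adj\{\Phi_0\}$ a rotation by $\pi/2$, so its square is $-I_2$). The clean repair, which neither you nor the paper states explicitly, is to define $d=\big[\adj\{\Phi_0\}\big]^\top\phi$ instead of \eqref{d}; then the correctly applied lemma yields $\det\{\Phi\}=\det\{\Phi_0\}+|[\adj\{\Phi_0\}]^\top\phi|^2$ with no symmetry assumption.

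The second obstacle you flag is also genuine and is not resolved by the paper either: \eqref{solctpee} compares $\Delta^2$, not $\Delta$, so $\det\{\Phi\}\geq\det\{\Phi_0\}$ does not imply $\det^2\{\Phi\}\geq\det^2\{\Phi_0\}$ unless one also knows $\det\{\Phi_0\}\geq 0$ (if $\det\{\Phi_0\}<0$, adding a nonnegative term can shrink the magnitude and \emph{slow} convergence). The paper's sentence ``the proof is completed showing that $\det\{\Phi_0(t)\}<\det\{\Phi(t)\}$'' silently skips this; your restriction to symmetric positive semidefinite $\Phi_0$ is one way to close it, but it is an additional hypothesis not present in the proposition, which is stated for the general operators \eqref{genhi} (and indeed the simulation example uses two different filters, giving a non-symmetric $\Phi_0$). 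Finally, note that the strict inequality cannot hold at $t=0$ (both errors equal $\tilde\theta_i(0)$ there), and for $t>0$ strictness requires $\adj\{\Phi_0(s)\}\phi(s)\neq 0$ on a set of positive measure in $[0,t]$ --- a caveat the paper acknowledges only in passing. In short: same approach, correctly executed as far as it goes, and the points at which your argument does not close are exactly the gaps in the published proof.
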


\begin{proof}
From the definitions of  $\Phi(t)$ in \eqref{extlre}, the operators \eqref{genhi} and $\Phi_0(t)$ in \eqref{phi0} we have that
\begequ
\lab{phisum}
\Phi(t)=\Phi_0(t)+d(t)\phi^\top (t).
\endequ
In view of \eqref{solctpee} the proof is completed showing that 
$$
\det\{\Phi_0(t)\} < \det\{\Phi(t)\}.
$$
For, we apply Sylvester's determinant formula \cite{LANTIS} to \eqref{phisum} to get
\begalis{
\det\{\Phi(t)\}&=\det\{\Phi_0(t)\}+d^\top (t)\adj\{\Phi_0(t)\}\phi(t)\\
&=\det\{\Phi_0(t)\}\!+\!\phi^\top (t)\big[\adj\{\Phi_0(t)\}\big]\!^\top \!\adj\{\Phi_0(t)\}\phi(t)\\
&=\det\{\Phi_0(t)\} +|\adj\{\Phi_0(t)\}\phi(t)|^2,
}
where we have used \eqref{d} to obtain the second equation. The proof is completed noting that $\det\{\Phi_0(t)\}\neq 0$ implies that $\adj\{\Phi_0(t)\}$ is full rank. Hence, if $\phi(t)\neq 0$, the second right hand term of the last identity above is positive.
\end{proof}
%
\section{CT DREM Estimators with Alert Finite-Time Convergence}
\lab{sec5}
%
In \cite{GERetalsysid} we have showed that CT DREM can be used to generate estimates that converge in {\em finite time} under the weakest interval excitation assumption. 
\subsection{An FTC DREM}
\lab{subsec50}
%
For ease of reference, we recall the FTC result used in \cite{GERetalsysid} to solve a mutivariable adaptive control problem. 

\begin{proposition}\em
\lab{pro60}
Consider the scalar CT LREs  \eqref{scalre} and the gradient-descent estimator \eqref{ctest}. Fix a constant $\mu_i \in (0,1)$ and assume there exists a time $t_c  \in \rea_{> 0}$ such that
\begequ
\lab{inttc}
\gamma_i \int_0^{t_c} \Delta^2(s) ds \geq -\ln(\mu_i).
\endequ
Define the FTC estimate
\begequ
\lab{hatthew}
\hat \theta_i^{\tt FTC}(t) :={1  \over 1 -w^{\tt c}_i(t)}[ \hat \theta_i(t) - w_i^{\tt c}(t)\hat \theta_i(0)],
\endequ
where $w_i^{\tt c}(t)$ is defined via the clipping function
\begequ
\lab{condit}
w_i^{\tt c}(t)=\left \{ \begin{array}{lcr} \mu_i  \;\;\; \mbox{if} \;\;\;\;\;w_i(t) \geq \mu_i\\ \\ w_i(t)\;\; \mbox{if} \;\;\;\;\; w_i(t) <\mu_i,
\end{array} \right.
\endequ
with $w_i(t)$ given by 
\begequ
\lab{wi}
\dot{w}_i(t)=-\gamma\Delta^2(t) w_i(t),\;w_i(0) =1.
\endequ
The parameter estimation error converges to zero in {\em finite-time}. More precisely, $\hat \theta^{\tt FTC}_i(t) = \theta_i,\; \forall t \geq t_c.$
\end{proposition}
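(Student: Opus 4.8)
The plan is to use the fact that the monitoring signal $w_i$ in \eqref{wi} and the parameter error $\tilde{\theta}_i$ of the gradient estimator \eqref{ctest} obey the \emph{same} scalar linear differential equation, and are therefore proportional to each other through the (unknown) initial error $\tilde{\theta}_i(0)$. First I would integrate both dynamics explicitly. From \eqref{wi}, with the same gain $\gamma_i$ as in \eqref{ctest},
\[
w_i(t)=e^{-\gamma_i\int_0^t\Delta^2(s)\,ds},
\]
which is strictly positive, satisfies $w_i(0)=1$, and is monotonically non-increasing on $\rea_{\geq 0}$ since $\dot w_i\leq 0$. Comparing this with the closed-form solution \eqref{solctpee} of the PEE \eqref{peect} yields the key identity $\tilde{\theta}_i(t)=w_i(t)\,\tilde{\theta}_i(0)$.

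Next I would check well-posedness of the estimate \eqref{hatthew}. The clipping rule \eqref{condit} gives $w_i^{\tt c}(t)=\min\{w_i(t),\mu_i\}$, hence $w_i^{\tt c}(t)\leq\mu_i<1$ for all $t$, so that $1-w_i^{\tt c}(t)>0$ and the division in \eqref{hatthew} is always legitimate. Then, writing $\hat{\theta}_i=\tilde{\theta}_i+\theta_i$ and $\hat{\theta}_i(0)=\tilde{\theta}_i(0)+\theta_i$, substituting into \eqref{hatthew}, and invoking the identity above, I would arrive at the compact expression for the reconstruction error
\[
\hat{\theta}_i^{\tt FTC}(t)-\theta_i=\frac{w_i(t)-w_i^{\tt c}(t)}{1-w_i^{\tt c}(t)}\,\tilde{\theta}_i(0).
\]

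Finally, the excitation hypothesis enters only through \eqref{inttc}, which is exactly the statement $w_i(t_c)\leq\mu_i$. Since $t\mapsto\int_0^t\Delta^2(s)\,ds$ is non-decreasing, equivalently $w_i$ is non-increasing, it follows that $w_i(t)\leq\mu_i$ for all $t\geq t_c$; for such $t$ one has $w_i^{\tt c}(t)=w_i(t)$, the numerator in the last display vanishes, and hence $\hat{\theta}_i^{\tt FTC}(t)=\theta_i$, which is the claim.

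I do not anticipate a real obstacle here: the substance of the result is simply that the known closed form of the gradient PEE permits cancelling the unknown factor $\tilde{\theta}_i(0)$ by means of the measurable multiplier $w_i$. The only points needing a little care are (a) ruling out division by zero in \eqref{hatthew}, which the saturation guarantees uniformly in time, and (b) keeping track of the direction of the inequalities in \eqref{condit} and \eqref{inttc}, i.e. confirming that the saturation is active precisely on $[0,t_c)$ and switches off for $t\geq t_c$; note that for $t<t_c$ no convergence is claimed, and there $\hat{\theta}_i^{\tt FTC}$ is merely a bounded interpolation between $\hat{\theta}_i(0)$ and the running estimate $\hat{\theta}_i(t)$.
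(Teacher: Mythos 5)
Your proof is correct and follows essentially the same route as the paper: solve \eqref{wi} in closed form, combine with \eqref{solctpee} to get $\tilde\theta_i(t)=w_i(t)\tilde\theta_i(0)$, and observe that monotonicity of $w_i$ plus \eqref{inttc} deactivates the clipping for $t\geq t_c$. The explicit error formula $\hat\theta_i^{\tt FTC}(t)-\theta_i=\frac{w_i(t)-w_i^{\tt c}(t)}{1-w_i^{\tt c}(t)}\tilde\theta_i(0)$ is a slightly more transparent packaging of the paper's rearrangement \eqref{keyrel}, but the argument is the same.
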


\begin{proof}
First, notice that the solution of \eqref{wi} is
$$
w_i(t)=e^{-\gamma_i \int_0^{t} \Delta^2(s)ds}.
$$
The key observation is that, using the equation above in \eqref{solctpee}, and rearranging terms we get that 
\begequ
\lab{keyrel}
[1-w_i(t)]\theta_i= \hat \theta_i(t) - w_i(t) \hat \theta_i(0).
\endequ
Now, observe that $w_i(t)$ is a non-increasing function and, under the interval excitation assumption \eqref{inttc}, we have that 
$$
w^{\tt c}_i(t)=w_i(t) < \mu_i,\;\forall t \geq t_c,
$$
completing the proof.
\end{proof}

\begrem
\lab{rem10}
The FTC property established in this section is ``trajectory-dependent'', in the sense that it relates only to the trajectory generated for the initial condition $w_i(0)=1$. This means that the flow of the closed-loop system contains other  trajectories, and the appearance of a perturbation may drive our ``good'' trajectory towards a ``bad'' one. This is, of course, a robustness problem that needs to be further investigated.
\endrem
\subsection{New FTC DREM}
\lab{subsec51}
%
The problem with the approach described above is that, {\em independently} from the behaviour of $\Delta(t)$, the function $w(t)$ is monotonically non-increasing and, generally, converges to zero. In this case, $\hat \theta_i^{\tt FTC}(t) \to \hat \theta_i(t)$, hence, the new estimator reduces to the standard gradient one, loosing its FTC feature. Therefore, to keep the {\em finite-time alertness} of the estimator, {\em i.e.}, to track parameter variations in finite-time upon the arrival of new excitation, it is necessary to reset the estimators \eqref{ctest} or \eqref{dtest}---a modification that is always problematic to implement. 

In the proposition below  we propose an alternative for the CT estimator of Proposition \ref{pro60} that does not suffer from this practical drawback. For the sake of brevity, we present only the derivation of a relation similar to \eqref{keyrel}, from which we can easily construct the FTC estimator.

\begin{proposition}\em
\lab{pro6}
Fix $T_{\tt D} \in \rea_{> 0}$ and define 
\begequ
\lab{dotwd}
\dot w_i^{\tt D}(t)=-\gamma_i\left[\Delta^2(t)-\Delta^2(t-T_{\tt D})\right]w_i^{\tt D}(t),\;w^{\tt D}_i(0)=1.
\endequ
Then,
$$
\left[1-w_i^{\tt D}(t)\right]\theta_i = \hatthe(t)-w_i^{\tt D}(t)\hatthe_i(t-T_{\tt D}).
$$
\end{proposition}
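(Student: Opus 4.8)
The plan is to follow the template of the proof of Proposition~\ref{pro60}, exploiting the fact that the CT PEE \eqref{peect} is explicitly integrable. First I would solve \eqref{dotwd} in closed form: it is a scalar linear time-varying equation with unit initial condition, hence
\begequ
w_i^{\tt D}(t)=\exp\Big(-\gamma_i\int_0^t\big[\Delta^2(s)-\Delta^2(s-T_{\tt D})\big]\,ds\Big).
\endequ
Changing the variable $u=s-T_{\tt D}$ in the delayed term and using the standing convention that all filters are initialised at rest (so $\Delta(s)=0$ for $s<0$), the two integrals telescope and, for every $t\ge T_{\tt D}$,
\begequ
w_i^{\tt D}(t)=\exp\Big(-\gamma_i\int_{t-T_{\tt D}}^{t}\Delta^2(s)\,ds\Big)=\frac{e^{-\gamma_i\int_0^{t}\Delta^2(s)\,ds}}{e^{-\gamma_i\int_0^{t-T_{\tt D}}\Delta^2(s)\,ds}}.
\endequ

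Second, I would invoke the closed-form PEE solution \eqref{solctpee} evaluated at the instants $t$ and $t-T_{\tt D}$, and read off that the last ratio equals $\tilde\theta_i(t)/\tilde\theta_i(t-T_{\tt D})$; multiplying through (harmless even if $\tilde\theta_i(0)=0$, the identity then being trivial) gives $\tilde\theta_i(t)=w_i^{\tt D}(t)\,\tilde\theta_i(t-T_{\tt D})$. Substituting $\tilde\theta_i=\hat\theta_i-\theta_i$ and collecting the constant term then yields $[1-w_i^{\tt D}(t)]\theta_i=\hat\theta_i(t)-w_i^{\tt D}(t)\hat\theta_i(t-T_{\tt D})$, which is the claim. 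A self-contained alternative that bypasses the explicit solution is to set $r_i(t):=\tilde\theta_i(t)-w_i^{\tt D}(t)\tilde\theta_i(t-T_{\tt D})$ and differentiate: using \eqref{peect} at $t$ and at $t-T_{\tt D}$ together with \eqref{dotwd}, the $\Delta^2(t-T_{\tt D})$ contributions cancel, leaving the homogeneous equation $\dot r_i(t)=-\gamma_i\Delta^2(t)r_i(t)$, so that $r_i\equiv 0$ once $r_i(T_{\tt D})=0$ is checked by the computation above.

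The only delicate point I anticipate is the start-up window $t\in[0,T_{\tt D})$, in which $\Delta^2(t-T_{\tt D})$ and $\hat\theta_i(t-T_{\tt D})$ refer to data prior to $t=0$; this is exactly what the ``filters at rest'' convention resolves, and it is immaterial for the construction, since the relevant excitation condition (the analogue of \eqref{inttc}) is in any case imposed over a moving window of length at least $T_{\tt D}$. Once the displayed relation is established, the alert FTC estimator is built verbatim as in \eqref{hatthew}--\eqref{condit}, with $w_i^{\tt D}$ (suitably clipped) in place of $w_i$ and the running delayed estimate $\hat\theta_i(t-T_{\tt D})$ in place of the frozen $\hat\theta_i(0)$; because $w_i^{\tt D}$ need not decay to zero when fresh excitation keeps arriving, the estimator does not collapse to the plain gradient one, which is the purpose of the modification.
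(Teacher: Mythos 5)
Your proposal is correct and follows essentially the same route as the paper: solve \eqref{dotwd} in closed form (using the convention that $\Delta$ vanishes before $t=0$) to identify $w_i^{\tt D}(t)=e^{-\gamma_i\int_{t-T_{\tt D}}^{t}\Delta^2(s)\,ds}$, then match this with the PEE solution \eqref{solctpee} over the window $[t-T_{\tt D},t]$ to get $\tilde\theta_i(t)=w_i^{\tt D}(t)\tilde\theta_i(t-T_{\tt D})$ and rearrange. The telescoping derivation and the alternative argument via $r_i(t)$ are just more explicit versions of the same steps the paper takes.
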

\begin{proof}
Without loss of generality we assume that $\Delta(t-T_{\tt D})=0$ for $t < T_{\tt D}$. Then the solution of \eqref{dotwd} is
\begequ
\lab{solwd}
w_i^{\tt D}(t)=e^{-\gamma_i\int_{t-T_{\tt D}}^t \Delta^2(s)ds}.
\endequ
Now, from the solution of the PEE \eqref{solctpee} in the interval $[t-T_{\tt D},t]$ we get
$$
\tilde \theta_i(t) = e^{-\gamma_i \int_{t - t_{\tt D}}^t \Delta^2(s)ds} \tilde \theta_i(t - T_{\tt D}).
$$
Hence, $\tilthe_i(t)=w_i^{\tt D}(t)\tilthe_i(t-T_{\tt D}).$ The proof of the claim is established rearranging the terms of the equation above.
\end{proof}

The difference between the two FTC algorithms is evident comparing $w_i(t)$ and the new signal $w_i^{\tt D}(t)$. As indicated above, the former is {\em always} non-increasing, while $w_i^{\tt D}(t)$ {\em grows} if $\Delta(t)$ increases its value in an interval of length $T_{\tt D}$, that is, if new excitation arrives to the system. In this way, the new FTC estimator preserves its FTC property if the parameters change. This fact is illustrated in the simulations of Subsection \ref{subsec62}.

\begrem
For the new FTC DREM estimator the interval excitation inequality becomes the existence of a time $t_c \geq T_{\tt D}$ such that
\begequ
\lab{inttcd}
\gamma_i \int^{t_c}_{t_c-T_{\tt D}} \Delta^2(s) ds \geq -\ln(\mu_i).
\endequ
Recalling \eqref{solwd}, it has the same interpretation as \eqref{inttc}.
\endrem  
\begrem
The choice of the coefficients $\mu_i$ is, clearly, a compromise between high-gain injection---if it is close to 1---and the time where FTC is achieved. See \cite{GERetalsysid} for additional details on this aspect.
\endrem
%
\section{Simulations}
\lab{sec6}
%
In this section we present simulations illustrating the results of Propositions \ref{pro5}, \ref{pro60} and \ref{pro6}.
\subsection{Transient performance improvement of Proposition \ref{pro5}}
\lab{subsec61}

To illustrate the performance improvement using the time-varying  term $d(t)$ introduced in Proposition \ref{pro5}, we consider the problem of parameter estimation of the CT, first-order, LTI plant described by 
\begali{
\lab{lti1}
\dot y(t)=a y(t)+bu(t),
}
where $u(t),y(t)\in\mathbb{R}$ are measurable signals and $a,b \in \rea$ are uncertain parameters that should be estimated.

Following the standard LTI systems identification procedure \cite{SASBOD}, we first re-parametrize the model \eqref{lti1} to obtain the LRE \eqref{lre}. For, we apply the filters $1\over p+\lambda$, with some $\lambda>0$, to \eqref{lti1} to get the LRE \eqref{lre} with
\begalis{
\phi(t):=\begmat{{1\over p+\lambda}[y](t) \\ {1\over p+\lambda}[u](t)},\;\theta:=\begmat{a+\lambda \\ b}.
}
Two simulation scenarios have been considered: with a plant input that is sufficiently rich or not---that is, when the regressor $\phi(t)$ is PE or not. More precisely, we considered $u(t)=15\sin(2.5t+1)$ and $u(t)=15$, respectively. For these two scenarios, we compare three different estimation schemes, namely, the standard gradient-descent \eqref{graest}, and the DREM scheme \eqref{ctest} with the operators $\calh_i$ defined in \eqref{genhi} for $d=0$ and $d(t)$ given by \eqref{d}. 

The following simulation parameters are used: $a=-0.4, b=0.4, \lambda=5, \gamma=1$, with the coefficients
\begalis{
\calh_1 &:\; n_1=1,\; A_1=-1, \ b_1=1, \ c_1 =1, \  \mu_1=0\\
\calh_2 & :\; n_2=1,\; A_2=-2, \ b_2=2, \ c_2 =1, \  \mu_2=0,
}
for the LTI part of the operators $\calh_i$.

The transient behavior of the parameter estimation errors $\tilde \theta_1(t)$ and  $\tilde \theta_2(t)$, for the three aforementioned estimators,  is shown in Figs. \ref{fig_1} and \ref{fig_2}. As predicted by the theory the gradient scheme yields a consistent estimate only for the case of sufficiently rich input, showing a significant steady state error for the constant plant input. On the other hand, both DREM schemes yield consistent estimates in both scenarios. Moreover, as expected from the analysis of Proposition \ref{pro5}, the addition of the feedforward term  $d(t)$ given in \eqref{d}, significantly improves the transient performance---achieving parameter convergence in less than a second, while the DREM scheme with $d=0$ takes almost two seconds to converge. It should also be mentioned that both DREM schemes significantly outperform the standard gradient, even in the presence of a sufficiently rich input. This property stems from the fact that, as indicated in Proposition \ref{pro2}, DREM ensures monotonicity of each element of the parameter error vector, a fact that is clearly illustrated in the simulations. 

\begin{figure}[tb]
\centering
\subfloat[][Parameter estimation error $\tilde\theta_1(t)$]{{\label{fig_2a}}\includegraphics[height=4cm]{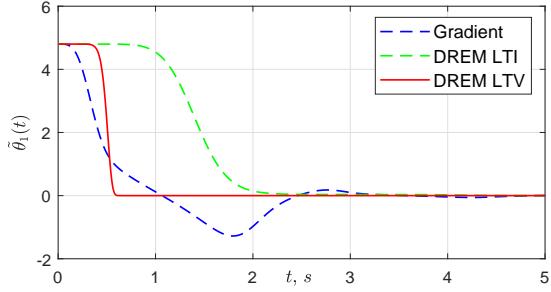}}
\\
\subfloat[][Parameter estimation error $\tilde\theta_2(t)$]{{\label{fig_2b}}\includegraphics[height=4cm]{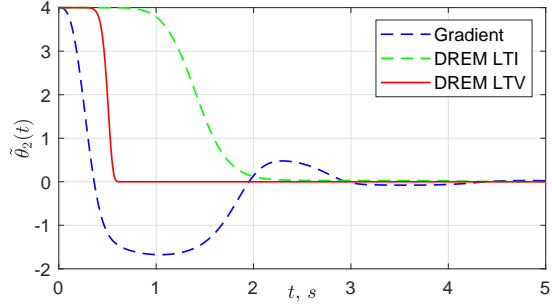}}
\vspace{-1mm}
\caption{Transients of the parameter estimation errors for different estimators and the control input $u(t)=15\sin(2.5t+1)$.}
\vspace{-3mm}
\label{fig_1}
\end{figure}
\begin{figure}[tb]
\centering
\vspace{-3mm}
\subfloat[][Parameter estimation error $\tilde\theta_1(t)$]{{\label{fig_1a}}\includegraphics[height=4cm]{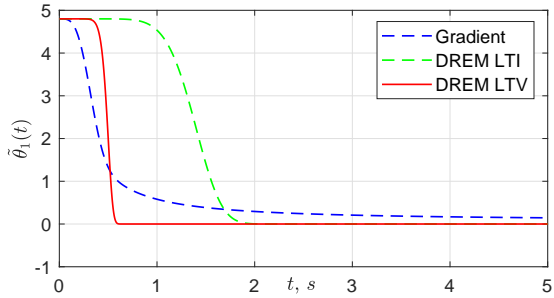}}
\\
\subfloat[][Parameter estimation error $\tilde\theta_2(t)$]{{\label{fig_1b}}\includegraphics[height=4cm]{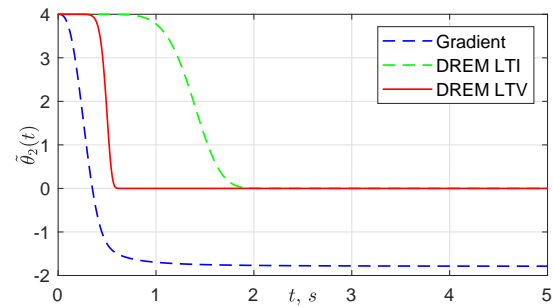}}
\vspace{-1mm}
\caption{Transients of the parameter estimation errors for different estimators and the control input $u(t)=15$.}
\label{fig_2}
\end{figure}
\subsection{Alertness preserving DREM with FTC of Proposition \ref{pro6}}
\lab{subsec62}
%
In this subsection we compare the two FTC DREMs  presented in Section \ref{sec5}. Namely, the FTC DREM of  Proposition \ref{pro60}, defined by \eqref{hatthew}, \eqref{wi}, and the new FTC DREM of Proposition \ref{pro6} given by \eqref{dotwd} and
$$
\hat \theta_i^{\tt FTC-D}(t) :={1  \over 1 -w^{\tt D}_i(t)}[\hat \theta_i(t) - w^{\tt D}_i(t)\hat \theta_i(0)],
$$
which is computed as soon as $w^{\tt D}_i(t) < \mu_i$. The objective of the simulation is to prove that the new FTC DREM is able to track time-varying parameters when new excitation arrives. This is in contrast with the old FTC DREM estimator that, since $w(t) \to 0$, converges to the gradient estimator and loses its FTC alertness property. 

We consider the simplest case of a scalar system $y(t)=\Delta(t)\theta$ and simulate the gradient estimator \eqref{ctest}, that is,
$$
\dot {\hat \theta}(t)=\gamma \Delta(t)[y(t)- \Delta(t)\hat\theta(t)],
$$
together with \eqref{wi} and \eqref{dotwd}, which are computed for $t \geq t_c$, with $t_c$ defined via the interval excitation criteria \eqref{inttc} and \eqref{inttcd}, respectively.  

We consider two scenarios: with and without excitation in $\Delta(t)$. For the first case we consider the PE signal $\Delta(t) = \sin(2\pi t)$, and for the second one $\Delta(t) = \frac{1}{t+1}$. Note that in the second case $\Delta(t)\to 0$, hence it is not PE. However, $\Delta(t) \not \in \mathcal{L}_2$, hence it satisfies the conditions for convergence of the DREM estimator. 

For simulations we set $\gamma = 2$, $\mu=0.98$, and $T_{\tt D}=0.2$. These parameters have been chosen such that the transients of both FTC estimators coincide in the ideal case when $\theta$ is constant and the system is excited. To illustrate the FTC tracking capabilities of the estimators the unknown parameter $\theta$ is time-varying and given by
\[ 
	\theta(t) = \begin{cases}
		10 \text{ for } 0\le t <10, \\
		15 \text{ for } 10\le t <20, \\		
		15-0.5(t-20) \text{ for } 20\le t <30, \\		
		10 \text{ for } t>30, \\
	\end{cases}
\]
{\em i.e.}, it starts at $10$, jumps to $15$ at $t=10$, and then linearly returns to $10$.

The transient of the estimators for $t\in[0,3]$ and  $\Delta(t) = \sin(2\pi t)$ are given in Fig. \ref{fig:fig1}, where we plot the gradient estimate $\hat \theta(t)$, as well as the old and the new FTC estimates $\hat \theta^{\tt FTC}(t)$  and $\hat \theta^{\tt FTC-D}(t)$.  We observe that, as expected, both FTC estimators are overlapped and converge in finite time, while the gradient converges only asymptotically.
  
The behavior of the estimators for $t \in [9,40]$ is shown in Figure \ref{fig:fig2}, where we also plot the time-varying parameter $\theta(t)$. As predicted by the theory, the old FTC behaves as the gradient estimator and their trajectories coincide. On the other hand, the new estimator preserves FTC alertness after the first parameter jump and achieves fast tracking of the linearly time-varying $\theta(t)$.

For the non-PE case of $\Delta(t) = \frac{1}{t+1}$, the transients of the estimators are given in Fig. \ref{fig:fig3}. We observe that both FTC estimators, again, essentially coincide in the first few seconds and converge in finite time, while the gradient does it only asymptotically. After the first parameter change at $t=10$ the old FTC and the gradient coincide, while the new FTC manages to track in finite time the parameter jump. However, during the ramp parameter change---because of the lack of excitation---neither one of the estimators can track the parameter variation but the new FTC estimator performs much better.

\begin{figure}[tb]
	\centering
	\includegraphics[height=4cm]{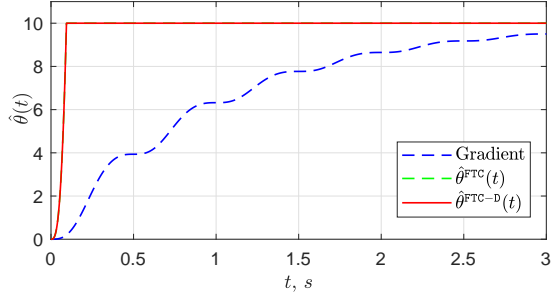}
	\caption{Transients of the parameter estimates for $t \in [0,3]$ with $\Delta(t) \in PE$.}
	\label{fig:fig1}
\end{figure}
\begin{figure}[tb]
	\centering
	\includegraphics[height=4cm]{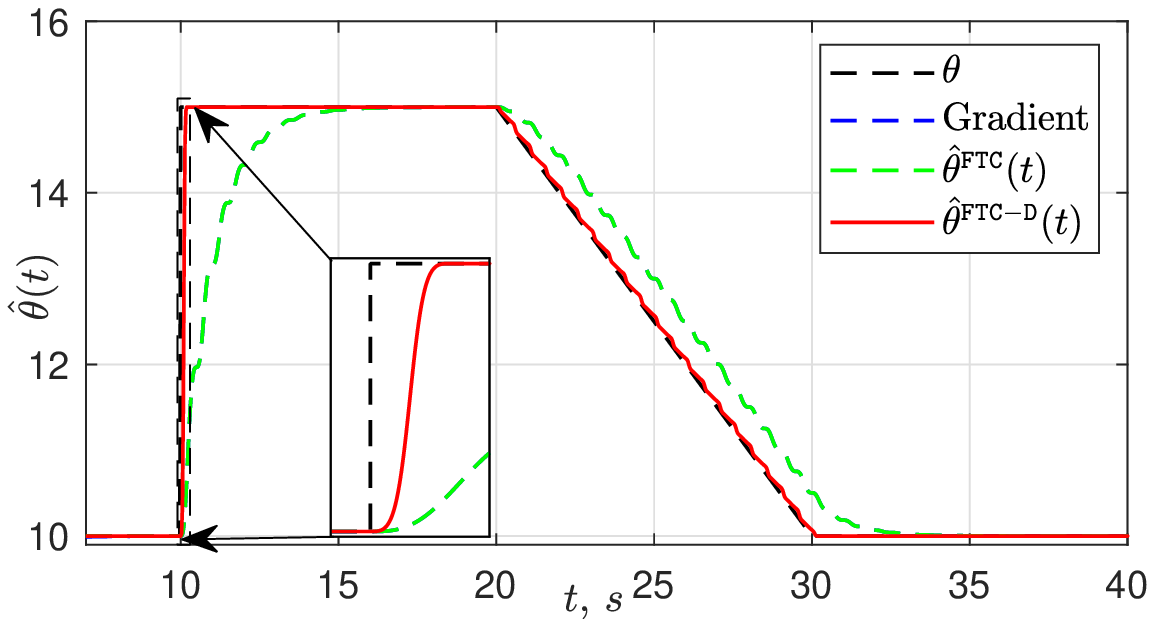}
	\caption{Transients of the parameter estimates for $t \in [9,40]$ with $\Delta(t) \in PE$.}
	\label{fig:fig2}
\end{figure}
\begin{figure}[tb]
	\centering 
	\includegraphics[height=4cm]{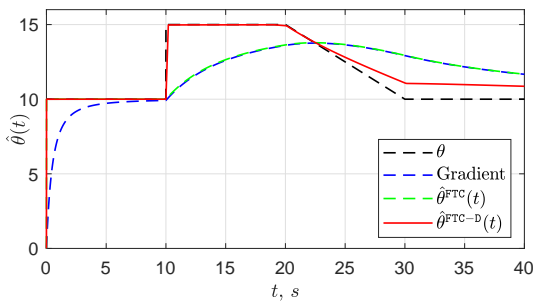}
	\caption{Transients of the parameter estimates for $\Delta(t)=\frac{1}{t+1}$.}
	\label{fig:fig3}
\end{figure}
%
\section{Future Work}
\lab{sec7}
%
Current research is under way to derive some of the new results presented only for the CT time case, to the practically important, DT case. Moreover, in the spirit of \cite{BELetal}, we are further exploring the role of the operator $\calh$ on the determinant of the extended regressor matrix $\Phi$ and we plan to study the effect of an additive signal in the LRE \eqref{lre}, to study its input-to-state stability properties. 

A  widely open, long-term research topic is how to deal with {\em nonlinear parameterizations}, that is, the case in which \eqref{lre} is replaced by $y=F(\phi,\theta)$, where $F(\cdot,\cdot)$ is a nonlinear function. Some preliminary results exploiting convexity, concavity or monotonicity may be found in \cite{ANNetal,LIUetal,LIUetalscl}. As pointed out in \cite{ARAetaltac}, DREM is directly applicable---without overparameterization---in the simplest case of separable nonlinearities, that is, when the regression is of the form $y=F_\phi(\phi)F_\theta(\theta)$. The more general case is a challenging open problem.     
%
\section*{Acknowledgment}
The authors would like to thank Vladimir Nikiforov and Dmitry Gerasimov for many useful discussions that helped us to improve the quality of our contribution. 

This paper is partly supported by by Government of Russian Federation (GOSZADANIE 2.8878.2017/8.9, grant 08-08), the European Union's Horizon 2020 Research and Innovation Programme under Grant 739551 (KIOS CoE).
%
%

\end{document}